\newcommand*\samethanks[1][\value{footnote}]{\footnotemark[#1]}
\title{Quantum Lower Bound for Graph Collision Implies Lower Bound for Triangle Detection} 
\author{Kaspars Balodis\thanks{The research leading to these results has received
funding from the European Union Seventh Framework Programme (FP7/2007-2013)
under grant agreement n%
${{}^\circ}$
600700 (QALGO), ERC Advanced Grant MQC, and Latvian State Research programme NexIT project No.1.} \and J\={a}nis Iraids\samethanks}
\institute{Faculty of Computing, University of Latvia, Rai\c{n}a bulv\={a}ris 19, R\={\i}ga, LV-1586, Latvija}
\date{\today}
\begin{document}

\maketitle

\begin{abstract}
We show that an improvement to the best known quantum lower bound for \textsc{Graph-Collision} problem implies an improvement to the best known lower bound for \textsc{Triangle} problem in the quantum query complexity model. In \textsc{Graph-Collision} we are given free access to a graph $(V,E)$ and access to a function $f:V\rightarrow \{0,1\}$ as a black box. We are asked to determine if there exist $(u,v) \in E$, such that $f(u)=f(v)=1$. In \textsc{Triangle} we have a black box access to an adjacency matrix of a graph and we have to determine if the graph contains a triangle. For both of these problems the known lower bounds are trivial ($\Omega(\sqrt{n})$ and $\Omega(n)$, respectively) and there is no known matching upper bound.
\end{abstract}

\section{Introduction}

By $Q(f)$ we denote the bounded-error quantum query complexity of a function $f$.
We consider the quantum query complexity for some graph problems.

\begin{definition}
In \textsc{Triangle} problem it is asked whether an $n$-vertex graph $G=(V,E)$ contains a triangle, i.e. a complete subgraph on three vertices.
The adjacency matrix of the graph is given in a black box which can be queried by asking if $(x,y) \in E$.
\end{definition}

Recently there have been several improvements in the algorithms for the \textsc{Triangle} problem in the quantum black box model. The problem was first considered by Buhrman et al. in 2005 \cite{Betal05} who gave an $O(n+\sqrt{nm})$ algorithm where $n$ is the number of vertices and $m$ -- the number of edges. Later in 2007 Magniez et al. gave an $\tilde{O}(n^{13/10})$ algorithm based on quantum walks. Introducing a novel concept -- learning graphs, and using a new technique in 2012 Belovs \cite{Bel12} was able to reduce the complexity to $O(n^{35/27})$. In 2013 Lee et al. \cite{LMS13} using a more refined learning graph approach reduced the complexity to $\tilde{O}(n^{9/7})$. Currently the best known algorithm is by Le Gall who exhibited a quantum algorithm which solves the \textsc{Triangle} problem with query complexity $\tilde{O}(n^{5/4})$ \cite{LeGall14}.
Classically the query complexity of \textsc{Triangle} is $\Theta(n^2)$; however, it is an open question whether \textsc{Triangle} can be computed in time better than $O(n^\omega)$ where $\omega$ is the matrix multiplication constant.

\begin{definition}
In \textsc{Graph-Collision$_G$} problem a known $n$-vertex undirected graph $G=(V,E)$ is given and a coloring function $f:V \rightarrow \{0,1\}$ whose values can be obtained by querying the black box for the value of $f(x)$ of a given $x \in V$.
We say that a vertex $x \in V$ is marked iff $f(x)=1$.
The value of the \textsc{Graph-Collision$_G$} instance is $1$ iff there exists an edge whose both vertices are marked, i.e. $\exists (x,y) \in E \  f(x)=f(y)=1$.
\end{definition}

By $Q(\textsc{Graph-Collision})$ we mean the complexity of solving  \textsc{Graph-Collision$_G$} for the hardest $n$-vertex graph $G$.

There has been an increased interest in the quantum query complexity of the \textsc{Graph-Collision} problem, mainly because
algorithms for solving \textsc{Graph-Collision} are used as a subroutine in algorithms for the \textsc{Triangle} problem \cite{MSS07}
and Boolean matrix multiplication \cite{JKM12}.

The best known quantum algorithm for \textsc{Graph-Collision} for an arbitrary $n$-vertex graph has complexity $O(n^{2/3})$ \cite{MSS07}.
However, for some graph classes there are algorithms with complexity $O(\sqrt{n})$ \cite{Amb13,Bel12a,GI12,JKM12}. 
It is an open question whether for every $n$-vertex graph $G$ \textsc{Graph-Collision$_G$} can be solved with $O(\sqrt{n})$ queries.

Contrary to the improvements in the algorithms for these two problems, the best known lower bounds for $Q(\textsc{Graph-Collision})$ and $Q(\textsc{Triangle})$ are still the trivial $\Omega (\sqrt{n})$ and $\Omega (n)$ respectively, which follow from the reduction to search problem.
Nonetheless these lower bounds seem hard to improve with the current techniques.

As mentioned before, algorithms for \textsc{Graph-Collision} have been used as a subroutine for constructing algorithms for the \textsc{Triangle} problem, therefore an improved algorithm for \textsc{Graph-Collision} would result in an improved algorithm for \textsc{Triangle}.
In this paper we show a reduction in the opposite direction---that an improvement in the lower bound on $Q(\textsc{Graph-Collision})$ would imply an improvement in the lower bound on $Q(\textsc{Triangle})$.

\section{Result}

\begin{theorem}
If there is a graph $G = (V, E)$ with $n$ vertices such that \textsc{Graph-Collision$_G$} has quantum query complexity $t$ then \textsc{Triangle} problem has quantum query complexity at least $\Omega(t\sqrt{n})$.
\end{theorem}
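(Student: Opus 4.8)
The plan is to realize \textsc{Triangle} on a graph of $\Theta(n)$ vertices as an OR of $n$ independent copies of \textsc{Graph-Collision}$_G$, and then to lower-bound that OR-composition by the quantum adversary composition theorem. Throughout write $\mathrm{GC}_G$ as shorthand for \textsc{Graph-Collision}$_G$, so that the hypothesis reads $Q(\mathrm{GC}_G)=t$.

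First I would fix the hard graph $G=(V,E)$ with $|V|=n$ and build, from any $n$-tuple of colorings $f_1,\dots,f_n\colon V\to\{0,1\}$, a single tripartite graph $T$ on the vertex set $W\cup L\cup R$, where $W=\{w_1,\dots,w_n\}$ indexes the copies and $L=\{l_1,\dots,l_n\}$, $R=\{r_1,\dots,r_n\}$ are two disjoint copies of $V$. The edges of $T$ are: $l_u r_v\in E(T)$ iff $(u,v)\in E$; $w_i l_u\in E(T)$ iff $f_i(u)=1$; and $w_i r_v\in E(T)$ iff $f_i(v)=1$; there are no edges inside any one of the three parts $W,L,R$. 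Because $T$ is tripartite, every triangle must take exactly one vertex from each part, so the triangles of $T$ are precisely the triples $\{w_i,l_u,r_v\}$ with $(u,v)\in E$ and $f_i(u)=f_i(v)=1$. Hence $T$ contains a triangle iff $\bigvee_{i=1}^{n}\mathrm{GC}_G(f_i)=1$. Note that the tripartite structure makes $T$ triangle-free whenever none of the $f_i$ has a collision, no matter whether $G$ itself contains triangles, since a triangle of $G$ would require $L$--$L$ or $R$--$R$ edges that are absent by construction.

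Next I would observe that the reduction is query-cheap. A query to the adjacency matrix of $T$ is of one of three kinds: an $L$--$R$ pair, whose answer is determined by the fixed, publicly known graph $G$ and costs nothing; a pair $w_i l_u$ or $w_i r_v$, whose answer equals $f_i(u)$ or $f_i(v)$ and is obtained with a single query to the composed input; or a pair inside one part, which is a non-edge by construction. Thus any quantum algorithm solving \textsc{Triangle} on the $3n$-vertex graph $T$ with $Q$ queries yields an algorithm solving $F:=\mathrm{OR}_n\circ\mathrm{GC}_G$, the OR of $n$ disjoint copies of $\mathrm{GC}_G$, with $O(Q)$ queries, giving $Q(\textsc{Triangle})=\Omega\big(Q(F)\big)$. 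It then remains to show $Q(F)=\Omega(t\sqrt n)$. Here I would invoke the general (negative-weight) adversary characterization $Q(h)=\Theta(\mathrm{ADV}^{\pm}(h))$ together with the perfect composition property $\mathrm{ADV}^{\pm}(h\circ g^{\,n})=\mathrm{ADV}^{\pm}(h)\cdot\mathrm{ADV}^{\pm}(g)$. Taking $h=\mathrm{OR}_n$ and $g=\mathrm{GC}_G$, and using $\mathrm{ADV}^{\pm}(\mathrm{OR}_n)=\Theta(\sqrt n)$ together with $\mathrm{ADV}^{\pm}(\mathrm{GC}_G)=\Theta(t)$ (the latter from the hypothesis), gives $Q(F)=\Theta(\mathrm{ADV}^{\pm}(F))=\Theta(\sqrt n\,t)$. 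Combining with the previous inequality yields $Q(\textsc{Triangle})=\Omega(t\sqrt n)$, as claimed.

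The step I expect to be the main obstacle is the correct invocation of the composition theorem. It is a statement about the general adversary bound, so I must route the argument through $Q=\Theta(\mathrm{ADV}^{\pm})$ in both directions rather than manipulating query counts directly, and I must ensure that the $n$ copies of $\mathrm{GC}_G$ act on disjoint inputs so that $F$ is genuinely of the form $\mathrm{OR}_n\circ g^{\,n}$ to which the composition identity applies. The remaining points---the $O(1)$ query-simulation overhead and the $3n$-versus-$n$ vertex count---are routine bookkeeping absorbed into the $\Omega$.
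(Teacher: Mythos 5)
Your proposal is correct and follows essentially the same route as the paper: your tripartite graph $T=W\cup L\cup R$ is exactly the paper's construction (the bipartite double cover of $G$ on $L\cup R$ plus the index vertices $z_i=w_i$), and your invocation of $Q=\Theta(\mathrm{ADV}^{\pm})$ together with adversary composition is precisely the content of Reichardt's composition theorem $Q(f\bullet g)=\Theta(Q(f)Q(g))$ that the paper cites. Your explicit accounting of the query-simulation overhead (fixed $L$--$R$ edges free, $W$--$L$/$W$--$R$ edges one query each) is a point the paper leaves implicit, but it does not change the argument.
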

\begin{proof}
We show how to transform the graph $G$ into a graph $G'$ with $3n$ vertices so that it is hard to decide if $G'$ contains a triangle.
More precisely, we construct the graph $G'$ in such a way that solving the \textsc{Triangle} problem on $G'$ is equivalent to solving \textit{OR} function from the results of $n$ independent instances of \textsc{Graph-Collision$_G$}.

First, we want to get rid of any triangles in $G$, therefore we transform $G$ into an equivalent bipartite graph $G_2=(V_2, E_2)$ with $2n$ vertices by setting $V_2 = \{ v_1, v_2 \mid v \in V \}$ and $E_2 = \{ (x_1,y_2) \mid (x,y) \in E \}$.
The graph $G_2$ is equivalent to $G$ in the following sense---if we mark the vertices $v_1$ and $v_2$ in $G_2$ for every marked vertex $v$ in $G$, then $G_2$ has a collision iff $G$ has a collision.
However, the graph $G_2$ does not contain any triangle (since it is bipartite).

Next, we add $n$ isolated vertices $z_1, \dots, z_n$ to $G_2$ thereby obtaining a graph $G'$.
Let $f_1,\dots,f_n:V \rightarrow \{0,1\}$ be the colorings from $n$ independent \textsc{Graph-Collision$_{G}$} instances.
We add the edges $(z_i, v_1)$ and $(z_i, v_2)$ to $G'$ iff $v \in V$ is marked by the respective coloring, i.e., iff $f_i(v)=1$.

 See Fig. \ref{graphs} for an example.

\begin{figure}[!htb]
\centering
\begin{tikzpicture}[]
\tikzstyle{vert}=[circle,draw=black,fill=white,inner sep=1.5pt, minimum size=20]

\node[vert] (a1) at (90  :1)   {a};
\node[vert] (a2) at (18 :1)    {b};
\node[vert] (a3) at (-54:1)    {c};
\node[vert] (a4) at (-126:1)   {d};
\node[vert] (a5) at (-198:1)   {e};

\draw (a1) -- (a2);
\draw (a2) -- (a3);
\draw (a3) -- (a4);
\draw (a3) -- (a5);
\draw (a4) -- (a5);

\begin{scope}[xshift=6cm]

\node[vert] (a1) at (-0.75, 5)   {$a_1$};
\node[vert] (a2) at (-0.75, 4)   {$b_1$};
\node[vert] (a3) at (-0.75, 3)   {$c_1$};
\node[vert] (a4) at (-0.75, 2)   {$d_1$};
\node[vert] (a5) at (-0.75, 1)   {$e_1$};

\node[vert] (b1) at (0.75, 5)   {$a_2$};
\node[vert] (b2) at (0.75, 4)   {$b_2$};
\node[vert] (b3) at (0.75, 3)   {$c_2$};
\node[vert] (b4) at (0.75, 2)   {$d_2$};
\node[vert] (b5) at (0.75, 1)   {$e_2$};

\draw (a1) -- (b2);
\draw (a2) -- (b3);
\draw (a3) -- (b4);
\draw (a3) -- (b5);
\draw (a4) -- (b5);

\draw (b1) -- (a2);
\draw (b2) -- (a3);
\draw (b3) -- (a4);
\draw (b3) -- (a5);
\draw (b4) -- (a5);

\node[vert] (z1) at (-2, -1)   {$z_1$};
\node[vert] (z2) at (-1, -1)   {$z_2$};
\node[vert] (z3) at ( 0, -1)   {$z_3$};
\node[vert] (z4) at ( 1, -1)   {$z_4$};
\node[vert] (z5) at ( 2, -1)   {$z_5$};

\begin{pgfonlayer}{myback}
\draw (z1) -- (a4);
\draw (z1) -- (a5);
\draw (z1) -- (b4);
\draw (z1) -- (b5);
\end{pgfonlayer}

\draw (z2)+( 65:0.65) -- (z2);
\draw (z2)+( 80:0.65) -- (z2);

\draw (z3)+( 70:0.75) -- (z3);
\draw (z3)+( 75:0.75) -- (z3);
\draw (z3)+( 80:0.75) -- (z3);
\draw (z3)+(110:0.75) -- (z3);
\draw (z3)+(115:0.75) -- (z3);
\draw (z3)+(120:0.75) -- (z3);

\draw (z4)+(100:0.75) -- (z4);
\draw (z4)+(104:0.75) -- (z4);
\draw (z4)+(120:0.75) -- (z4);
\draw (z4)+(124:0.75) -- (z4);

\draw (z5)+(110:0.75) -- (z5);
\draw (z5)+(115:0.75) -- (z5);
\draw (z5)+(130:0.75) -- (z5);
\draw (z5)+(140:0.75) -- (z5);

\node[] (d) at (0.8, 0.10)   {$\dots$};

\end{scope}

\end{tikzpicture}
\caption{Graph $G$ and the resulting graph $G'$}
\label{graphs}
\end{figure}
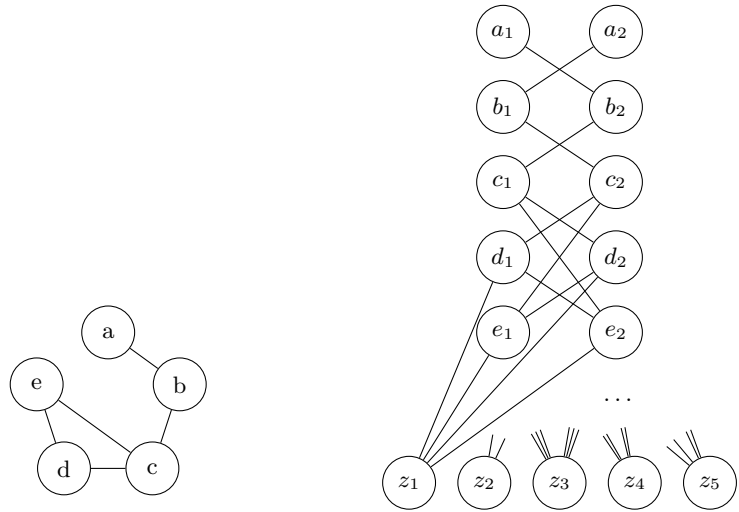

The only possible triangles in the graph $G'$ can be of the form $\{z_i, v_1, w_2\}$ for some $i \in \{1, \dots, n\}$ and $v, w \in V$.
Moreover, there is a triangle $\{z_i, v_1, w_2\}$ iff $f_i$ is such coloring that $G$ has a collision $(v, w)$, i.e., iff $f_i(v)=f_i(w)=1$.
Therefore detecting a triangle in $G'$ is essentially calculating $OR$ function from the results of $n$ instances of \textsc{Graph-Collision$_G$}.

We now use the fact that $OR$ function requires $\Omega(\sqrt{n})$ queries, the assumption that \textsc{Graph-Collision$_G$} requires $t$ queries and the Theorem 1.5. from \cite{Reichardt11}:
\begin{theorem}
Let $f : \{ 0, 1\}^n \rightarrow \{0, 1\}$ and $g : \{ 0, 1\}^m \rightarrow \{0, 1\}$. Then
\[
Q(f \bullet g) = \Theta(Q(f)Q(g)),
\]
where $(f \bullet g) (x)= f( g(x_1, \dots, x_m), \dots, g(x_{(n-1)m+1}, \dots, x_{nm}) )$.
\end{theorem}

Setting $f = OR$ and $g = \textsc{Graph-Collision$_G$}$ gives the desired bound.
\end{proof}

As the next corollary shows, a better lower bound on \textsc{Graph-Collision} implies a better lower bound on the \textsc{Triangle} problem.

\begin{corollary}
If $Q_2(\textsc{Graph-Collision}) = \omega(\sqrt{n})$ then $Q_2(\textsc{Triangle}) = \omega(n)$.
\end{corollary}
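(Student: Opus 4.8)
The plan is to obtain the Corollary as an immediate consequence of the Theorem, so the work is entirely a matter of substituting the asymptotic hypothesis and tracking the vertex count. First I would fix notation: $Q_2$ denotes the bounded-error (two-sided) quantum query complexity, the same quantity as the $Q(\cdot)$ appearing in the Theorem, and $Q_2(\textsc{Graph-Collision})$ is by definition the complexity of the hardest $n$-vertex graph $G$. The hypothesis $Q_2(\textsc{Graph-Collision}) = \omega(\sqrt{n})$ then says precisely that the quantity $t = t(n)$ appearing in the Theorem satisfies $t = \omega(\sqrt{n})$.

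Next I would invoke the Theorem with this value of $t$: for each $n$ there is a graph $G$ on $n$ vertices whose \textsc{Graph-Collision} complexity is $t$, and the Theorem produces a $3n$-vertex graph $G'$ for which the \textsc{Triangle} problem has complexity $\Omega(t\sqrt{n})$. Substituting $t = \omega(\sqrt{n})$ gives
\[
\Omega(t\sqrt{n}) = \Omega\bigl(\omega(\sqrt{n})\cdot\sqrt{n}\bigr) = \omega(n),
\]
using that $t/\sqrt{n} \to \infty$ forces $t\sqrt{n}/n \to \infty$.

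Finally I would reconcile the number of vertices. The lower bound just obtained is for the \textsc{Triangle} problem on $N = 3n$ vertices, so it should be expressed as a function of $N$; since $\omega(n) = \omega(N/3) = \omega(N)$, the constant factor is absorbed and we conclude $Q_2(\textsc{Triangle}) = \omega(N)$. I do not anticipate any genuine obstacle here, as the entire argument is the Theorem together with elementary manipulation of the $\omega(\cdot)$ notation. The only point requiring a little care is the bookkeeping of the vertex count between the \textsc{Graph-Collision} instance and the \textsc{Triangle} instance, along with the harmless observation that the two-sided-error complexities $Q$ and $Q_2$ coincide.
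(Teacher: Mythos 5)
Your proposal is correct and matches the paper's intent exactly: the paper gives no separate proof of the corollary, treating it as an immediate consequence of the theorem, which is precisely the substitution $t=\omega(\sqrt{n})$ plus the observation that the factor of $3$ in the vertex count is absorbed by the $\omega(\cdot)$ notation. Your bookkeeping of $N=3n$ and the identification of $Q$ with $Q_2$ are the only points of care, and you handle both.
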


\bibliographystyle{abbrv}
\bibliography{gc_tr}

\providecommand{\noopsort}[1]{}
\begin{thebibliography}{10}

\bibitem{Amb13}
A.~Ambainis, K.~Balodis, J.~Iraids, R.~Ozols, and J.~Smotrovs.
\newblock Parameterized quantum query complexity of graph collision.
\newblock In {\em Proceedings of Workshop on Quantum and Classical Complexity},
  pages 5--16, 2013.
\newblock arXiv preprint: \url{http://arxiv.org/abs/1305.1021}.

\bibitem{Bel12a}
A.~Belovs.
\newblock Learning-graph-based quantum algorithm for k-distinctness.
\newblock In {\em IEEE 53rd Annual Symposium on Foundations of Computer Science
  (FOCS)}, pages 207--216. IEEE, 2012.
\newblock arXiv preprint: \url{http://arxiv.org/abs/1205.1534}.

\bibitem{Bel12}
A.~Belovs.
\newblock Span programs for functions with constant-sized 1-certificates:
  Extended abstract.
\newblock In {\em Proceedings of the Forty-fourth Annual ACM Symposium on
  Theory of Computing}, STOC '12, pages 77--84, New York, NY, USA, 2012. ACM.
\newblock arXiv preprint: \url{http://arxiv.org/abs/1105.4024}.

\bibitem{Betal05}
H.~Buhrman, C.~D\"{u}rr, M.~Heiligman, P.~H{\o}yer, F.~Magniez, M.~Santha, and
  R.~de~Wolf.
\newblock Quantum algorithms for element distinctness.
\newblock {\em SIAM Journal on Computing}, 34(6):1324--1330, 2005.
\newblock arXiv preprint: \url{http://arxiv.org/abs/quant-ph/0007016}.

\bibitem{LeGall14}
F.~{\noopsort{Gall}}{Le Gall}.
\newblock Improved quantum algorithm for triangle finding via combinatorial
  arguments.
\newblock In {\em IEEE 55rd Annual Symposium on Foundations of Computer Science
  (FOCS)}, pages 216--225, Oct 2014.
\newblock arXiv preprint: \url{http://arxiv.org/abs/1407.0085}.

\bibitem{GI12}
D.~Gavinsky and T.~Ito.
\newblock A quantum query algorithm for the graph collision problem.
\newblock {\em arXiv e-prints}, 2012.
\newblock \url{http://arxiv.org/abs/1204.1527}.

\bibitem{JKM12}
S.~Jeffery, R.~Kothari, and F.~Magniez.
\newblock Improving quantum query complexity of boolean matrix multiplication
  using graph collision.
\newblock In {\em Automata, Languages, and Programming}, pages 522--532.
  Springer, 2012.
\newblock arXiv preprint: \url{http://arxiv.org/abs/1112.5855}.

\bibitem{LMS13}
T.~Lee, F.~Magniez, and M.~Santha.
\newblock Improved quantum query algorithms for triangle finding and
  associativity testing.
\newblock In {\em Proceedings of the Twenty-Fourth Annual ACM-SIAM Symposium on
  Discrete Algorithms}, pages 1486--1502. SIAM, 2013.
\newblock arXiv preprint: \url{http://arxiv.org/abs/1210.1014}.

\bibitem{MSS07}
F.~Magniez, M.~Santha, and M.~Szegedy.
\newblock Quantum algorithms for the triangle problem.
\newblock {\em SIAM Journal on Computing}, 37(2):413--424, 2007.
\newblock arXiv preprint: \url{http://arxiv.org/abs/quant-ph/0310134}.

\bibitem{Reichardt11}
B.~W. Reichardt.
\newblock Reflections for quantum query algorithms.
\newblock In {\em Proceedings of the Twenty-Second Annual ACM-SIAM Symposium on
  Discrete Algorithms}, SODA '11, pages 560--569. SIAM, 2011.
\newblock arXiv preprint: \url{http://arxiv.org/abs/1005.1601}.

\end{thebibliography}

\end{document}